\newtheorem{DummyLemma}{DummyLemma}[section]
\newtheorem{theorem}[DummyLemma]{\bf Theorem}
\newtheorem{lemma}[DummyLemma]{\bf Lemma}
\newtheorem{corollary}[DummyLemma]{\bf Corollary}
\newtheorem{ExamplE}[DummyLemma]{\bf Example}
\newtheorem{DefinitioN}[DummyLemma]{\bf Definition}
\newenvironment{proof}
               {\noindent {\bf Proof:} \rm}{\hfill $\Box$}
\begin{document}

\title{XML Reconstruction View Selection in XML Databases:\\Complexity Analysis and Approximation Scheme}
\author{Artem Chebotko and Bin Fu\\
Department of Computer Science\\
University of Texas-Pan American\\
Edinburg, TX 78539, USA\\
\{artem, binfu\}@cs.panam.edu\\
}

\maketitle

\begin{abstract}
Query evaluation in an XML database requires reconstructing XML
subtrees rooted at nodes found by an XML query. Since XML subtree
reconstruction can be expensive, one approach to improve query
response time is to use reconstruction views - materialized XML
subtrees of an XML document, whose nodes are frequently accessed by
XML queries. For this approach to be efficient, the principal
requirement is a framework for view selection. In this work, we are
the first to formalize and study the problem of XML reconstruction
view selection. The input is a tree $T$, in which every node $i$ has
a size $c_i$ and profit $p_i$, and the size limitation $C$. The
target is to find a subset of subtrees rooted at nodes $i_1,\cdots,
i_k$ respectively such that $c_{i_1}+\cdots +c_{i_k}\le C$, and
$p_{i_1}+\cdots +p_{i_k}$ is maximal. Furthermore, there is no
overlap between any two subtrees selected in the solution. We prove
that this problem is NP-hard and present a fully polynomial-time
approximation scheme (FPTAS) as a solution.
\end{abstract}

\begin{figure*}[!htb]
  \begin{center}
    \mbox{
      \subfigure[XML tree]{\scalebox{0.8}{\includegraphics{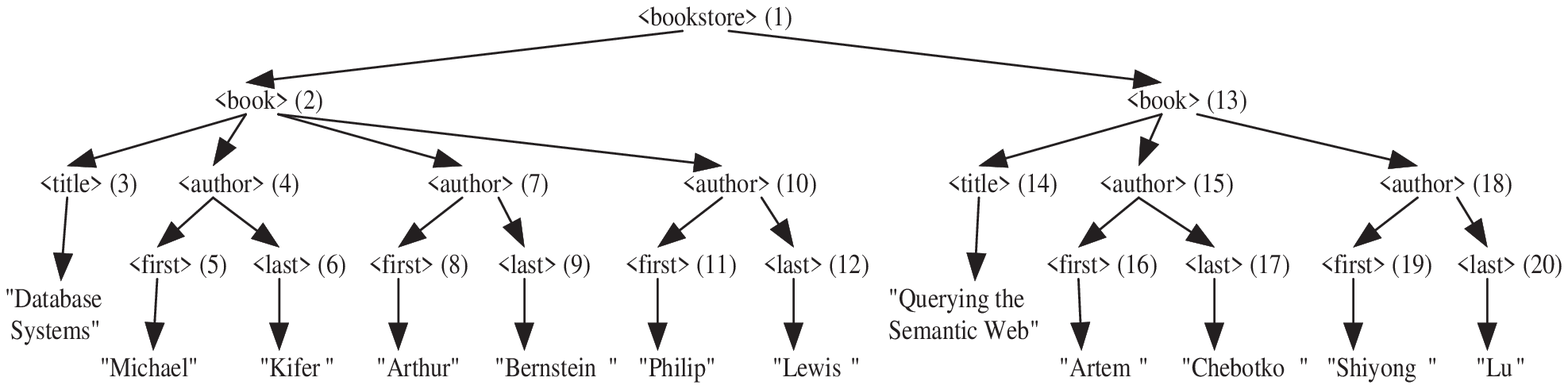} \label{fig:xmltree}}}
      }

    \mbox{
    \subfigure[Edge table]{
        \begin{minipage}[b]{0.5\linewidth}
        \footnotesize\centering

            \vspace{0.5cm}
            \begin{tabular}[t]{|l|l|l|l|}

            \multicolumn{4}{c}{\bf r$_{edge}$}\\
            \hline

            \underline{\textit{ID}} & \textit{parentID} & \textit{name} & \textit{content} \\
            \hline

            1 & \texttt{NULL} & bookstore & \texttt{NULL} \\
            \hline

            2 & 1 & book & \texttt{NULL} \\
            \hline

            3 & 2 & title & Database Systems \\
            \hline

            4 & 2 & author & \texttt{NULL} \\
            \hline

            5 & 4 & first & Michael \\
            \hline

            6 & 4 & last & Kifer \\
            \hline

            7 & 2 & author & \texttt{NULL} \\
            \hline

            8 & 7 & first & Arthur \\
            \hline

            9 & 7 & last & Bernstein \\
            \hline

            10 & 2 & author & \texttt{NULL} \\
            \hline

            11 & 10 & first & Philip \\
            \hline

            12 & 10 & last & Lewis \\
            \hline

            13 & 1 & book & \texttt{NULL} \\
            \hline

            14 & 13 & title & Querying the Semantic Web \\
            \hline

            ... & ... & ... & ... \\
            \hline

            \end{tabular}\vspace{0.1cm}

        \end{minipage}
        \label{fig:relation}
    }

\subfigure[XML reconstruction view]{
        \begin{minipage}[b]{0.5\linewidth}
        \footnotesize\centering

            \vspace{0.5cm}
            \begin{tabbing}
            \hspace*{0.2in}\=\hspace*{0.2in}\=\hspace*{0.2in}\=\hspace*{0.2in}\=\hspace*{0.2in}\=\hspace*{0.2in}\=\hspace*{0.2in}\=\\
            \texttt{<book>}\\
            \> \texttt{<title> Database Systems </title>}\\
            \> \texttt{<author>}\\
            \>\> \texttt{<first> Michael </first>}\\
            \>\> \texttt{<last> Kifer </last>}\\
            \> \texttt{</author>}\\
            \> \texttt{<author>}\\
            \>\> \texttt{<first> Arthur </first>}\\
            \>\> \texttt{<last> Bernstein </last>}\\
            \> \texttt{</author>}\\
            \> \texttt{<author>}\\
            \>\> \texttt{<first> Philip </first>}\\
            \>\> \texttt{<last> Lewis </last>}\\
            \> \texttt{</author>}\\
            \texttt{</book>}\\
            \\
            \end{tabbing}

        \end{minipage}
        \label{fig:view}
    }

    }

    \caption{An example of an XML tree, its relational storage, and XML reconstruction view}
    \label{fig:xml}
  \end{center}
\end{figure*}

\section{Introduction}

With XML\footnote{\url{http://www.w3.org/XML}}~\cite{BunemanSA99}
being the de facto standard for business and Web data
representation and exchange, storage and querying of large XML
data collections is recognized as an important and challenging
research problem. A number of XML
databases~\cite{FlorescuK99,ShanmugasundaramGTZDN99,BohannonFRS02,KrishnamurthyCKN04,ChaudhuriCSW05,BalminP05,BonczGKMRT06,AtayCLLF07,Schoning01,JagadishACLNPPSWWY02,HundlingSW05}
have been developed to serve as a solution to this problem. While
XML databases can employ various storage models, such as
relational model or native XML tree model, they support standard
XML query languages, called
XPath\footnote{\url{http://www.w3.org/TR/xpath}} and
XQuery\footnote{\url{http://www.w3.org/XML/Query}}. In general, an
XML query specifies which nodes in an XML tree need to be
retrieved. Once an XML tree is stored into an XML database, a
query over this tree usually requires two steps: (1)~finding the
specified nodes, if any, in the XML tree and (2)~reconstructing
and returning XML subtrees rooted at found nodes as a query
result. The second step is called XML subtree
reconstruction~\cite{ChebotkoALF07,ChebotkoLALF05} and may have a
significant impact on query response time. One approach to
minimize XML subtree reconstruction time is to cache XML subtrees
rooted at frequently accessed nodes as illustrated in the
following example.

Consider an XML tree in Figure~\ref{fig:xmltree} that describes a
sample bookstore inventory. The tree nodes correspond to XML
elements, e.g., \emph{bookstore} and \emph{book}, and data values,
e.g., \emph{``Arthur''} and \emph{``Bernstein''}, and the edges
represent parent-child relationships among nodes, e.g., all the
\emph{book} elements are children of \emph{bookstore}. In
addition, each element node is assigned a unique identifier that
is shown next to the node in the figure. As an example, in
Figure~\ref{fig:relation}, we show how this XML tree can be stored
into a single table in an RDBMS using the edge
approach~\cite{FlorescuK99}. The edge table $r_{edge}$ stores each
XML element as a separate tuple that includes the element ID, ID
of its parent, element name, and element data content. A sample
query over this XML tree that retrieves books with title
``Database Systems'' can be expressed in XPath as:

\vspace{-0.5cm}
            \begin{tabbing}
            \hspace*{0.2in}\=\hspace*{0.2in}\=\hspace*{0.2in}\=\hspace*{0.2in}\=\hspace*{0.2in}\=\hspace*{0.2in}\=\hspace*{0.2in}\=\\
            \> {\footnotesize \texttt{/bookstore/book[title="Database Systems"]}}\\
            \end{tabbing}\vspace{-0.5cm}
This query can be translated into relational algebra or SQL over
the edge table to retrieve IDs of the \emph{book} elements that
satisfy the condition:

\vspace{-0.5cm}
            \begin{tabbing}
            \hspace*{0.2in}\=\hspace*{0.2in}\=\hspace*{0.2in}\=\hspace*{0.2in}\=\hspace*{0.2in}\=\hspace*{0.2in}\=\hspace*{0.2in}\=\\
            \> $\pi_{r_2.ID}\ ($\\
            \>\> $r_1 \bowtie_{r_1.ID=r_2.parentID \wedge r_1.name=`bookstore' \wedge}$\\
            \>\>\>\> $_{r_1.parentID\ is\ NULL \wedge r_2.name=`book'}$\\
            \>\> $r_2 \bowtie_{r_2.ID = r_3.parentID \wedge r_3.name = `title' \wedge}$\\
            \>\>\>\> $_{r_3.content = `Database Systems'}$\\
            \>\> $r_3 \ )$\\
            \end{tabbing}\vspace{-0.5cm}
where $r_1$, $r_2$, and $r_3$ are aliases of table $r_{edge}$. For
the edge table in Figure~\ref{fig:relation}, the relational
algebra query returns ID ``2'', that uniquely identifies the first
\emph{book} element in the tree. However, to retrieve useful
information about the book, the query evaluator must further
retrieve all the descendants of the \emph{book} node and
reconstruct their parent-child relationships into an XML subtree
rooted at this node; this requires additional self-joins of the
edge table and a reconstruction algorithm, such as the one
proposed in~\cite{ChebotkoALF07}. Instead, to avoid expensive XML
subtree reconstruction, the subtree can be explicitly stored in
the database as an XML reconstruction view (see
Figure~\ref{fig:view}). This materialized view can be used for the
above XPath query or any other query that needs to reconstruct and
return the \emph{book} node (with ID ``2'') or its descendant.

In this work, we study the problem of selecting XML reconstruction
views to materialize: given a set of XML elements $D$ from an XML
database, their access frequencies $a_i$ (aka workload), a set of
ancestor-descendant relationships $AD$ among these elements, and a
storage capacity $\delta$, find a set of elements $M$ from $D$,
whose XML subtrees should be materialized as reconstruction views,
such that their combined size is no larger than $\delta$. To our
best knowledge, our solution to this problem is the first one
proposed in the literature. Our main contributions and the paper
organization are as follows. In Section~\ref{sec:RW}, we discuss
related work. In Section~\ref{sec:problem}, we formally define the
XML reconstruction view selection problem. In
Sections~\ref{sec:complexity} and \ref{sec:FPTAS}, we prove that
the problem is NP-hard and describe a fully polynomial-time
approximation scheme (FPTAS) for the problem. We conclude the
paper and list future work directions in
Section~\ref{sec:conclude}.

\section{Related Work} \label{sec:RW}

We studied the XML subtree reconstruction problem in the context
of a relational storage of XML documents
in~\cite{ChebotkoALF07,ChebotkoLALF05}, where several algorithms
have been proposed. Given an XML element returned by an XML query,
our algorithms retrieve all its descendants from a database and
reconstruct their relationships into an XML subtree that is
returned as the query result. To our best knowledge, there have
been no previous work on materializing reconstruction views or XML
reconstruction view selection.

Materialized
views~\cite{BalminOBCP04,MandhaniS05,XuO05,LakshmananWZ06,FanGJK07,TangYOCW08}
have been successfully used for query optimization in XML
databases. These research works rewrite an XML query, such that it
can be answered either using only available materialized views, if
possible, or accessing both the database and materialized views.
View maintenance in XML databases has been studied in
\cite{TatemuraSPAC05,SawiresTPAAC06}. There have been only one
recent work~\cite{TangYTOB09} on materialized view selection in
the context of XML databases. In~\cite{TangYTOB09}, the problem is
defined as: find views over XML data, given XML databases, storage
space, and a set of queries, such that the combined view size does
not exceed the storage space. The proposed solution produces
minimal XML views as candidates for the given query workload,
organizes them into a graph, and uses two view selection
strategies to choose views to materialize. This approach makes an
assumption that views are used to answer XML queries completely
(not partially) without accessing an underlying XML database. The
XML reconstruction view problem studied in our work focuses on a
different aspect of XML query processing: it finds views to
materialize based on how frequently an XML element needs to be
reconstructed. However, XML reconstruction views can be
complimentarily used for query answering, if desired.

Finally, the materialized view selection problem have been
extensively studied in data
warehouses~\cite{BaralisPT97,YangKL97,Gupta97,GuptaM99,KarloffM99,ChirkovaHS02}
and distributed databases~\cite{Kossmann00}. These research
results are hardly applicable to XML tree structures and in
particular to subtree reconstruction, which is not required for
data warehouses or relational databases.

\section{XML Reconstruction View Selection Problem}
\label{sec:problem}

In this section, we formally define the XML reconstruction view
selection problem addressed in our work.

\textbf{Problem formulation.} Given $n$ XML elements, $D = \{D_1,
D_2, \cdots, D_n\}$, and an ancestor-descendant relationship $AD$
over $D$ such that if $(D_j, D_i) \in AD$, then $D_j$ is an
ancestor of $D_i$, let $COST_R(D_i)$ be the access cost of
accessing unmaterialized $D_i$, and let $COST_A(D_i)$ be the
access cost of accessing materialized $D_i$. We have $COST_A(D_i)
< COST_R(D_i)$ since reconstruction of $D_i$ takes time. We use
$size(D_i)$ to denote the memory capacity required to store a
materialized XML element, $size(D_i) > 0$ and $size(D_i) <
size(D_j)$ for any $(D_j,D_i) \in AD$. Given a workload that is
characterized by $a_{i} (i=1,2,\ldots,n)$ representing the access
frequency of $D_i$. The \textit{XML reconstruction view selection
problem} is to select a set of elements $M$ from $D$ to be
materialized to minimize the total access cost

\[
\tau(D, M) = \sum_{i=1}^n a_{i} \times COST(D_i), \label{e1}
\]

under the disk capacity constraint

\[
\sum_{D_i \in M} size(D_i) \leq \delta,
\]

where $COST(D_i) = COST_A(D_i)$ if $D_i \in M$ or for some
ancestor $D_j$ of $D_i$, $D_j \in M$, otherwise $COST(D_i) =
COST_R(D_i)$. $\delta$ denotes the available memory capacity,
$\delta~\geq~0$.

Next, let $\bigtriangledown COST(D_i) = COST_R(D_i)- COST_A(D_i)$
means the cost saving by materialization, then one can show that
function $\tau$ is minimized if and only if the following function
$\lambda$ is maximized

\[
\lambda(D, M) = \sum_{D_i \in M^+} a_{i} \times \bigtriangledown
COST(D_i)
\]

under the disk capacity constraint

\[
\sum_{D_i \in M} size(D_i) \leq \delta,
\]

where $M^+$ represents all the materialized XML elements and their
descendant elements in $D$, it is defined as $M^+ = \{D_i \mid D_i
\in M\ or\ \exists D_j. (D_j, D_i) \in AD \wedge D_j~\in~M\}$.


\section{NP-Completeness}
\label{sec:complexity}

In this section, we prove that the XML reconstruction view
selection problem is NP-hard. First, the maximization problem is
changed into the equivalent decision problem.

\textbf{Equivalent decision problem.} Given $D$, $AD$,
$size(D_i)$, $a_i$, $\bigtriangledown COST(D_i)$ and $\delta$ as
defined in Section~\ref{sec:problem}, let $K$ denotes the cost
saving goal, $K~\geq~0$. Is there a subset $M \subseteq D$ such
that

\[
\sum_{D_i \in M^+} a_{i} \times \bigtriangledown COST(D_i) \geq K
\]

and

\[
\sum_{D_i \in M} size(D_i) \leq \delta
\]

$M^+$ represents all the materialized XML elements and their
descendant elements in $D$, it is defined as $M^+ = \{D_i \mid D_i
\in M\ or\ \exists D_j. (D_j, D_i) \in AD \wedge D_j~\in~M\}$.

In order to study this problem in a convenient model, we have the
following simplified version.

The input is a tree $T$, in which every node $i$ has a size $c_i$
and profit $p_i$, and the size limitation $C$. The target is to find
a subset of subtrees rooted at nodes $i_1,\cdots, i_k$ such that
$c_{i_1}+\cdots +c_{i_k}\le C$, and $p_{i_1}+\cdots +p_{i_k}$ is
maximal. Furthermore, there is no overlap between any two subtrees
selected in the solution.

We prove that the decision problem of the XML reconstruction view
selection is an NP-hard. A polynomial time reduction from
\textit{KNAPSACK}~\cite{book1979} to it is constructed.

\begin{theorem} \label{theoremNPcomplete}
The decision problem of the XML reconstruction view selection is
NP-complete.
\end{theorem}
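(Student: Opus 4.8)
The plan is to prove NP-completeness in two parts: membership in NP and NP-hardness via reduction from KNAPSACK.

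First I would argue membership in NP. Given a candidate subset $M \subseteq D$ (equivalently, a candidate collection of subtree roots $i_1,\dots,i_k$ in the simplified tree model), a verifier can in polynomial time: (i) check that the chosen roots are pairwise incomparable in the ancestor-descendant order, so the selected subtrees are disjoint; (ii) compute $M^+$ by collecting each selected root together with all its descendants; (iii) sum $\sum_{D_i \in M} size(D_i)$ and check it is $\le \delta$; and (iv) sum $\sum_{D_i \in M^+} a_i \times \bigtriangledown COST(D_i)$ and check it is $\ge K$. All of these are simple traversals, so the decision problem is in NP.

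Next I would give the polynomial-time reduction from KNAPSACK. An instance of KNAPSACK consists of items $1,\dots,n$ with sizes $s_1,\dots,s_n$, values $v_1,\dots,v_n$, a size bound $B$, and a value target $V$; the question is whether some subset of items has total size $\le B$ and total value $\ge V$. I would build a tree $T$ with a fresh root $r$ having $n$ children $u_1,\dots,u_n$, where $u_j$ is a leaf corresponding to item $j$. Set $c_{u_j} = s_j$ and $p_{u_j} = v_j$ for each leaf, and set the root's size $c_r$ to be prohibitively large (e.g.\ $c_r = B+1$) with $p_r = 0$, so the root can never be selected. Take the capacity $C = B$ and the profit goal $K = V$. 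Since the leaves are pairwise incomparable and selecting the subtree rooted at a leaf $u_j$ just picks up item $j$ alone (and the root is unusable), feasible solutions in $T$ correspond exactly to feasible item subsets in the KNAPSACK instance, with matching size sums and profit sums. Hence the tree instance has a solution of profit $\ge K$ within capacity $C$ if and only if the KNAPSACK instance is a yes-instance. The construction is clearly polynomial.

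The main thing to get right is the correspondence between the "disjoint subtrees" structure and plain subset selection: one must ensure the gadget forbids choosing the root (or any internal node) so that the only meaningful choices are the singleton leaf-subtrees, which is why the root is given size $B+1 > C$. A minor point worth a sentence is that in the original $\lambda$-formulation the relevant quantity attached to node $D_i$ is $a_i \times \bigtriangledown COST(D_i)$; in the simplified model this is exactly the profit $p_i$, and since leaves have no descendants, $M^+ = M$ for leaf-only selections, so the two objective functions coincide on the images of the reduction. With membership in NP and the hardness reduction in hand, the theorem follows.
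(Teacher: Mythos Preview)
Your proof is correct and follows essentially the same approach as the paper: membership in NP plus a polynomial-time reduction from KNAPSACK in which items become leaves and the non-leaf node(s) are made unselectable by giving them prohibitively large size. The only cosmetic difference is the gadget shape---you use a star with a single root of size $B+1$, while the paper builds a binary tree whose internal nodes all have size $\infty$---but the argument and correspondence are the same.
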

\begin{proof}
It is straightforward to verify that the problem is in NP. Restrict
the problem to the well-known NP-complete problem
\textit{KNAPSACK}~\cite{book1979} by allowing only problem instances
in which:

Assume that a Knapsack problem has input $(p_1,c_1),\cdot,
(p_n,c_n)$, and parameters $K$ and $C$. We need to determine a
subset $S\subseteq \{1,\cdots, n\}$ such that $\sum_{i\in S} c_i\le
C$ and $\sum_{i\in S}p_i\ge K$.

Build a binary tree $T$ with exactly leaves. Let leaf $i$ have
profit $p_i$ and size $c_i$. Furthermore, each internal node, which
is not leaf, has size $\infty$ and profit $\infty$.

Clearly, any solution cannot contain any internal due to the size
limitation. We can only select a subset of leaves. This is
equivalent to the Knapsack problem.

\end{proof}

Finally, we state the NP-hardness of the XML reconstruction view
selection problem.

\begin{theorem}
The XML reconstruction view selection problem is NP-hard.
\end{theorem}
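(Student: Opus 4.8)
The plan is to obtain NP-hardness of the optimization version as an almost immediate consequence of Theorem~\ref{theoremNPcomplete} together with the equivalence established in Section~\ref{sec:problem} between minimizing $\tau(D,M)$ and maximizing $\lambda(D,M)$. Concretely, I would argue by a Cook (polynomial-time Turing) reduction: suppose there were a polynomial-time algorithm $\mathcal{A}$ that, given $D$, $AD$, $size(\cdot)$, $a_i$, $\bigtriangledown COST(\cdot)$ and $\delta$, returns an optimal materialization set $M^\ast$ (equivalently, the optimal value $\lambda^\ast=\lambda(D,M^\ast)$). Then the decision problem of Section~\ref{sec:complexity} is solved in polynomial time by running $\mathcal{A}$ once and testing whether $\lambda^\ast\ge K$. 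Since that decision problem is NP-complete (Theorem~\ref{theoremNPcomplete}), no such $\mathcal{A}$ exists unless $\mathrm{P}=\mathrm{NP}$; hence the XML reconstruction view selection problem is NP-hard.

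The one point that needs care is that the reduction behind Theorem~\ref{theoremNPcomplete} is phrased in the simplified tree model, so I would spell out that the tree instance it builds is a bona fide instance of the XML reconstruction view selection problem. Given a KNAPSACK instance $(p_1,c_1),\dots,(p_n,c_n)$ with bound $C$, take the binary tree $T$ with the $n$ items as leaves; let the leaves be the elements $D_i$ with $size(D_i)=c_i$, and set $a_i=1$ and $\bigtriangledown COST(D_i)=COST_R(D_i)-COST_A(D_i)=p_i$ (any positive choice of $COST_A(D_i)<COST_R(D_i)$ realizing this difference works). For every internal node assign a size larger than $C$ (a large finite value suffices, so the instance remains polynomially encodable), chosen increasing toward the root so that $size(D_i)<size(D_j)$ holds for every $(D_j,D_i)\in AD$. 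Because $\delta=C$, no internal node can appear in any feasible $M$, so $M$ is a set of leaves, $M^+=M$, and $\lambda(D,M)=\sum_{D_i\in M}p_i$ under $\sum_{D_i\in M}c_i\le C$: exactly the KNAPSACK objective with threshold $K$.

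The main obstacle, such as it is, lies in reconciling the ``non-overlapping subtrees / subtree profit'' bookkeeping of the simplified model with the $M^+$-based objective of the original problem. The key observation is that an optimal $M$ need never contain a node together with one of its ancestors: the descendant adds nothing new to $M^+$ yet still consumes capacity, so deleting it cannot decrease $\lambda$; consequently optimal solutions may be taken to be antichains, which is precisely the non-overlap condition. With that in hand, the simplified-model optimum and the XML $\lambda$-optimum coincide on the constructed instance, the reduction is polynomial in the size of the KNAPSACK input, and the theorem follows.
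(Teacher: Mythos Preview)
Your proposal is correct and follows the same route as the paper: derive NP-hardness of the optimization problem from the NP-completeness of its decision version (Theorem~\ref{theoremNPcomplete}) via the obvious Turing reduction. In fact, you supply more care than the paper's one-line proof does---replacing the $\infty$ sizes by polynomially bounded values exceeding $C$, enforcing the $size(D_i)<size(D_j)$ monotonicity along $AD$, and explicitly reconciling the antichain/non-overlap formulation with the $M^+$-based objective---none of which the paper spells out.
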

\begin{proof}
It follows from Theorem~\ref{theoremNPcomplete}, since the
equivalent decision problem is NP-complete.
\end{proof}

\section{Fully Polynomial-Time Approximation Scheme}
\label{sec:FPTAS}

We assume that each parameter is an integer. The input is $n$ XML
elements, $D = \{D_1, D_2, \cdots, D_n\}$ which will be
represented by an $AD$ tree $J$, where each edge in $J$ shows a
relationship between a pair of parent and child nodes.

We have a divide and conquer approach to develop a fully
approximation scheme. Given an $AD$ tree $J$ with root $r$, it has
subtrees $J_1,\cdots, J_k$ derived from the children $r_1,\cdots,
r_k$ of $r$. We find a set of approximate solutions among
$J_1,\cdots, J_{k/2}$ and another set of approximate solutions
among $J_{k/2+1},\cdots, J_{k}$.

We merge the two sets of approximate solutions to obtain the
solution for the union of subtrees $J_1,\cdots, J_k$. Add one more
solution that is derived by selecting the root $r$ of $J$. Group
those solutions into parts such that each part contains all
solutions $P$ with similar $\lambda(D,P)$. Prune those solution by
selecting the one from each part with the least size. This can
reduce the number of solution to be bounded by a polynomial.

We will use a list $P$ to represent the selection of elements from
$D$.

For a list of elements $P$, define $\lambda(P)=\sum_{D_i\in
P}a_i\times \nabla COST(D_i)$, and $\mu(P)=\sum_{D_i\in
P}size(D_i)$. Define $\chi(J)$ be the largest product of the node
degrees along a path from root to a leaf in the $AD$ tree $J$.

Assume that $\epsilon$ is a small constant with $1>\epsilon >0$. We
need an $(1+\epsilon)$ approximation. We maintain a list of
solutions $P_1, P_2,\cdots$, where $P_i$ is a list of elements in
$D$.

Let $f=(1+{\epsilon\over z})$ with $z=2\log \chi(J)$. Let
$w=\sum_{i=1}^n a_iCOST_R(D_i)$ and $s=\sum_{i=1}^n size(D_i)$.

Partition the interval $[0, w]$ into $I_1, I_2,\cdots, I_{t_1}$
such
 that $I_1=[0,1]$ and $I_k=(b_{k-1}, b_k]$ with $b_k=f\cdot b_{k-1}$ for $k<t$, and $I_{t_1}=(b_{t-1},w]$,
 where $b_{t-1}< w\le fb_{t-1}$.



Two lists $P_i$ and $P_j$,
 are in the same region if there exist $I_{k}$ such that both $\lambda(P_i)$ and $\lambda(P_j)$ are $I_k$.

For two lists of partial solutions $P_i=D_{i_1}\cdots D_{i_{m_1}}$
and $P_j=D_{j_1}\cdots D_{j_{m_2}}$, their link $P_i\circ
P_j=D_{i_1}\cdots D_{i_{m_1}}D_{j_1}\cdots D_{j_{m_2}}$.

\vskip 10pt

{\bf Prune} ($L$)

\qquad Input: $L$ is a list of partial solutions $P_1, P_2,
\cdots, P_m$;

\qquad Partition $L$ into parts $U_1, \cdots, U_v$ such that two
lists $P_i$ and $P_j$ are in the same part if $P_i$ and $P_j$ are
in the same region.

\qquad For each $U_i$, select $P_j$ such that $\mu(P_j)$ is the
least among all $P_j$ in $U_i$;

{\bf End of Prune}

\vskip 10pt

{\bf Merge} ($L_1, L_2$)

\qquad Input: $L_1$ and $L_2$ are two lists of solutions.

\qquad Let $L=\emptyset$;

\qquad For each $P_i\in L_1$ and each $P_j\in L_2$

\qquad\qquad append their link $P_i\circ P_j$ to $L$;

\qquad Return $L$;

{\bf End of Merge}

\vskip 10pt

{\bf Union} ($L_1, L_2,\cdots, L_k$)

\qquad Input: $L_1,\cdots, L_k$ are  lists of solutions.

\qquad If $k=1$ then return $L_1$;

\qquad Return Prune(Merge(Union$(L_1,\cdots, L_{k/2})$,

\qquad\qquad\qquad\qquad\qquad Union$(L_{k/2+1},\cdots, L_k)))$;

{\bf End of Union}

\vskip 10pt

{\bf Sketch} ($J$)

Input: $J$  is a set of $n$ elements according to their $AD$.

\qquad If $J$ only contains one element $D_i$, return the list
$L=D_i, \emptyset$ with two solutions.

\qquad Partition the list $J$ into subtrees $J_1,\cdots, J_k$
according to its $k$ children.

\qquad Let $L_0$ be the list that only contains solution $J$.

\qquad for $i=1$ to $k$ let $L_i$=Sketch($J_i$);

\qquad Return Union$(L_0, L_1,\cdots, L_k)$;

{\bf End of Sketch}

\vskip 10pt

For a list of elements $P$ and an $AD$ tree $J$, $P[J]$ is the list
of elements in both $P$ and $J$. If $J_1,\cdots, J_k$ are disjoint
subtrees of an $AD$ tree, $P[J_1,\cdots, J_k]$ is $P[J_1]\circ
\cdots \circ P[J_k]$.


In order to make it convenient, we make the tree $J$ normalized by
adding some useless nodes $D_i$ with $COST_R(D)=COST_A(D)=0$. The
size of tree is at most doubled after normalization. In the rest
of the section, we always assume $J$ is normalized.

\begin{lemma}\label{lemma-union} Assume that $L_i$  is a list of solutions for
the problem with $AD$ tree $J_i$ for $i=1,\cdots, k$. Let $P_i\in
L_i$ for $i=1,\cdots, k$. Then there exists $P\in
L=$Union($L_1,\cdots, L_k$) such that $\lambda(P)\le f^{\log k}\cdot
\lambda(P_1\circ \cdots \circ P_k)$ and $\mu(P^*)\le \mu(P_1\circ
\cdots \circ P_k))$.
\end{lemma}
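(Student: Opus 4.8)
The plan is to prove this by induction on $k$, following the recursive structure of the \textbf{Union} procedure. The base case $k=1$ is immediate: \textbf{Union}$(L_1)$ returns $L_1$ itself, so we take $P = P_1$, and since $\log 1 = 0$ we have $\lambda(P) \le f^{0}\lambda(P_1) = \lambda(P_1)$ and $\mu(P) \le \mu(P_1)$ trivially. For the inductive step, I would split the index set into the two halves $\{1,\dots,k/2\}$ and $\{k/2+1,\dots,k\}$ exactly as \textbf{Union} does. By the induction hypothesis applied to the first half, there is a solution $Q_1 \in$ \textbf{Union}$(L_1,\dots,L_{k/2})$ with $\lambda(Q_1) \le f^{\log(k/2)}\lambda(P_1\circ\cdots\circ P_{k/2})$ and $\mu(Q_1) \le \mu(P_1\circ\cdots\circ P_{k/2})$; symmetrically there is $Q_2$ in the second half's recursive call with the analogous bounds against $P_{k/2+1}\circ\cdots\circ P_k$.

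Next I would track what \textbf{Merge} and \textbf{Prune} do to the pair $Q_1, Q_2$. \textbf{Merge} produces the link $Q_1\circ Q_2$, which satisfies $\lambda(Q_1\circ Q_2) = \lambda(Q_1)+\lambda(Q_2)$ and $\mu(Q_1\circ Q_2)=\mu(Q_1)+\mu(Q_2)$ (these additivity facts follow directly from the definitions of $\lambda$ and $\mu$ as sums over the elements in the list, provided the sublists are over disjoint subtrees, which they are here). Combining with the two inductive bounds and using that $\lambda$ is additive over the disjoint pieces $P_1\circ\cdots\circ P_{k/2}$ and $P_{k/2+1}\circ\cdots\circ P_k$, together with $f^{\log(k/2)} = f^{\log k - 1} \le f^{\log k}$, gives $\lambda(Q_1\circ Q_2) \le f^{\log k}\lambda(P_1\circ\cdots\circ P_k)$ and $\mu(Q_1\circ Q_2)\le \mu(P_1\circ\cdots\circ P_k)$. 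Finally, \textbf{Prune} replaces $Q_1\circ Q_2$ by the solution $P$ in its region having least $\mu$; since $P$ and $Q_1\circ Q_2$ lie in the same region $I_{\ell}$, we have $\lambda(P) \le b_{\ell} \le f\cdot b_{\ell-1} < f\cdot\lambda(Q_1\circ Q_2)$ — wait, this is where care is needed: pruning can inflate $\lambda$ by a factor $f$, and one needs to check whether that factor is already absorbed. Here $\mu(P) \le \mu(Q_1\circ Q_2)$ is immediate from the choice of least size, but the $\lambda$ bound needs the region width factor.

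The main obstacle, then, is bookkeeping the accumulation of the factor $f$ through the recursion: the $f^{\log k}$ in the statement must simultaneously account for the $\log k$ levels of \textbf{Prune} (each contributing one factor of $f$ from the region granularity) and be consistent with the recursion $f^{\log k} = f \cdot f^{\log(k/2)}$. I would argue that at each of the $\log k$ merge-prune levels exactly one factor $f$ is introduced by \textbf{Prune}, and that $\lambda(P) \le f\cdot\lambda(Q_1\circ Q_2) \le f\cdot f^{\log(k/2)}\lambda(P_1\circ\cdots\circ P_k) = f^{\log k}\lambda(P_1\circ\cdots\circ P_k)$ — but this seems to consume the factor twice unless the inductive hypothesis is stated so that the factor from the current level's prune is the only new one. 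The clean fix is to recheck the exponent: with $k$ a power of two the recursion tree has depth $\log k$, each internal node does one \textbf{Prune}, so the total multiplicative loss is $f^{\log k}$, and the induction should be set up as $\lambda(P)\le f \cdot \max\bigl(f^{\log(k/2)}\lambda(\text{left}),\, f^{\log(k/2)}\lambda(\text{right})\bigr)$-style, i.e. the hypothesis already includes the prunes strictly below the current node and the current \textbf{Prune} supplies the last factor. I would also note the harmless typo that the statement writes $\mu(P^*)$ where it means $\mu(P)$, and that normalization of $J$ (assumed throughout) guarantees the subtrees handed to the recursive calls are genuinely disjoint so that the additivity of $\lambda$ and $\mu$ used above is valid.
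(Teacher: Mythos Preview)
Your approach is correct and matches the paper's proof essentially line for line: induction on $k$, apply the hypothesis to each half to obtain $Q_1$ and $Q_2$, form $Q_1\circ Q_2$ via \textbf{Merge}, and let \textbf{Prune} select the least-$\mu$ representative in the same region. Your worry about ``consuming the factor twice'' is unfounded---the chain $\lambda(P)\le f\,\lambda(Q_1\circ Q_2)\le f\cdot f^{\log(k/2)}\lambda(P_1\circ\cdots\circ P_k)=f^{\log k}\lambda(P_1\circ\cdots\circ P_k)$ that you wrote down is exactly the paper's computation (the inductive hypothesis contributes $f^{\log(k/2)}$ and the current \textbf{Prune} contributes the remaining factor $f$), and the $\mu$ bound is immediate since \textbf{Prune} selects the minimum-$\mu$ element of the region.
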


\begin{proof}
We prove by induction. It is trivial when $k=1$. Assume that the
lemma is true for cases less than $k$.

Let $M_1$ $=$ Union$(L_1,\cdots, L_{k/2})$ and $M_2$ $=$
Union$(L_{k/2+1},\cdots, L_{k})$.

Assume that $M_1$ contains $Q_1$ such that $\lambda(Q_1)\le f^{\log
(k/2)}\lambda(P_1\circ \cdots\circ P_{k/2})$ and $\mu(Q_1)\le
\mu(P_1\circ\cdots\circ P_{k/2})$.

Assume that $M_2$ contains $Q_2$ such that $\lambda(Q_2)\le f^{\log
(k/2)}\lambda(P_{k/2+1}\circ\cdots\circ P_{k}])$ and $\mu(Q_2)\le
\mu(P_{k/2+1}\circ\cdots\circ P_{k})$.

Let $Q=Q_1\circ Q_2$. Let $Q^*$ be the solution in the same region
with $Q$ and has the least $\mu(Q^*)$.  Therefore,

\begin{eqnarray*}
&&\lambda(Q^*)\\
&\le& f\lambda(Q)\\
&\le& f^{\log
(k/2)}f\lambda(P_1\circ\cdots\circ P_{k})\\
& \le& f^{\log k}\lambda(P_1\circ\cdots\circ P_{k}).
\end{eqnarray*}

Since $\mu(Q_1)\le \mu(P_1\circ\cdots\circ P_{k/2})$ and
$\mu(Q_2)\le \mu(P_{k/2+1}\circ\cdots\circ P_{k})$, we also have

\begin{eqnarray*}
\mu(Q^*)&\le& \mu(Q)\\
&\le& \mu(Q_1)+\mu(Q_2)\\
&\le& \mu(P_1\circ \cdots\circ
P_{k/2})+\mu(P_{k/2+1}\circ\cdots\circ P_{k})\\
&=&\mu(P_1\circ \cdots\circ
P_{k})\\
&=&\mu(P).
\end{eqnarray*}
\end{proof}

\begin{lemma}\label{lemma-1} Assume that $P$  is an arbitrary solution for
the problem with $AD$ tree $J$. For $L$$=$Sketch($J$), there exists
a solution $P'$ in the list $L$ such that $\lambda(P')\le f^{\log
\chi(J)}\cdot \lambda(P)$ and $\mu(P')\le \mu(P)$.
\end{lemma}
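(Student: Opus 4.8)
The plan is to prove Lemma~\ref{lemma-1} by structural induction on the $AD$ tree $J$, using Lemma~\ref{lemma-union} as the engine that handles the combination of children. First I would set up the base case: if $J$ consists of a single element $D_i$, then Sketch($J$) returns the list containing $D_i$ and $\emptyset$, so any solution $P$ (either $\{D_i\}$ or $\emptyset$) is literally present in $L$, and since $\chi(J)=1$ we have $f^{\log\chi(J)}=f^0=1$, so $\lambda(P')\le\lambda(P)$ and $\mu(P')\le\mu(P)$ hold trivially with $P'=P$.

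For the inductive step, let $r$ be the root of $J$ with children $r_1,\dots,r_k$ giving subtrees $J_1,\dots,J_k$, and let $P$ be an arbitrary solution. I would split into two cases according to whether $P$ selects the root $r$ (equivalently, selects the whole subtree $J$). If $r\in P$, then by the non-overlap condition $P$ is exactly the single-subtree solution $J$, which is $L_0$ in Sketch and is passed into Union; by Lemma~\ref{lemma-union} (applied with the trivial one-element lists, or by noting Union preserves at least one approximant of each input solution) there is $P'\in L$ with $\lambda(P')\le f^{\log(k+1)}\lambda(P)$ and $\mu(P')\le\mu(P)$; since $k+1\le\chi(J)$ when $r$ has degree $k$, this gives the bound. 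If $r\notin P$, then $P$ decomposes as $P[J_1]\circ\cdots\circ P[J_k]$ where each $P[J_i]$ is a solution for $J_i$. By the induction hypothesis applied to each $J_i$, there is $P_i\in L_i=$Sketch($J_i$) with $\lambda(P_i)\le f^{\log\chi(J_i)}\lambda(P[J_i])\le f^{\log\chi(J)-\log k}\lambda(P[J_i])$ (using $k\cdot\chi(J_i)\le\chi(J)$ since the root contributes a factor $k$ to the degree product along any path into $J_i$) and $\mu(P_i)\le\mu(P[J_i])$. Then Lemma~\ref{lemma-union} applied to the lists $L_0,L_1,\dots,L_k$ (that is, Union over $k+1$ lists) with the chosen $P_i$'s and the empty/dummy choice from $L_0$ yields $P'\in L$ with $\lambda(P')\le f^{\log(k+1)}\lambda(P_1\circ\cdots\circ P_k)\le f^{\log(k+1)}\cdot f^{\log\chi(J)-\log k}\lambda(P)$ and $\mu(P')\le\mu(P_1\circ\cdots\circ P_k)\le\mu(P)$. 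A short arithmetic check that $\log(k+1)+\log\chi(J)-\log k\le\log\chi(J)$ — or more honestly, that after the normalization doubling we may assume $k+1\le 2k$ and absorb the extra factor, or simply that $\chi(J)\ge k\cdot\max_i\chi(J_i)$ already leaves enough room — closes the estimate.

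The main obstacle I anticipate is bookkeeping the exponent of $f$ precisely: Lemma~\ref{lemma-union} charges $f^{\log k}$ for a Union over $k$ lists, the induction charges $f^{\log\chi(J_i)}$ per child, and the definition $\chi(J)=$ largest product of node degrees along a root-to-leaf path must be invoked to show these multiply to at most $f^{\log\chi(J)}$. One has to be careful that the root's own degree $k$ is the factor that upgrades $\max_i\chi(J_i)$ to $\chi(J)$, and that Union is called on $k+1$ lists (including $L_0$) rather than $k$ — the normalization step mentioned before the lemma (adding dummy zero-cost nodes, at most doubling the tree) is presumably what makes the $k$ versus $k+1$ discrepancy harmless, so I would state that dependence explicitly rather than gloss over it. The $\mu$ (size) bound is the easy half: every step — induction, Merge/link, Prune's selection of the least-$\mu$ representative in a region — only ever decreases or preserves total size, so it propagates without any loss factor.
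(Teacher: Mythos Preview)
Your approach is essentially the paper's: structural induction on $J$, with Lemma~\ref{lemma-union} doing the combining work at each internal node and the definition of $\chi(J)$ absorbing the per-level $f^{\log k}$ factor. The paper's own proof is in fact terser than yours---it does not case-split on whether the root is selected and writes the Union over $L_1,\dots,L_k$ (omitting $L_0$) when bounding the exponent---so your explicit treatment of the root case and of the $k$-versus-$(k{+}1)$ bookkeeping is more careful than what appears in the paper.
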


\begin{proof} We prove by induction.
The basis at $|J|\le 1$ is trivial. We assume that the claim is
true for all  $|J|< m$. Now assume that $|J|=m$ and $J$ has $k$
children which induce $k$ subtrees $J_1,\cdots, J_k$.

Let $L_i=P[J_i]$ for $i=1,\cdots, k$. By our hypothesis, for each
$i$ with $1\le i\le k$, there exists $Q_i\in L_i$ such that
$\lambda(Q_i)\le f^{\log \chi(J_i)}\cdot \lambda(P[J_i])$ and
$\mu(Q_i)\le \mu(P[J_i]))$.

Let $M$$=$Union$(L_1,\cdots, L_{k})$. By Lemma~\ref{lemma-union},
there exists $P'\in M$ such that

\begin{eqnarray*}
\lambda(P')&\le& f^{\log (k/2)} \lambda(Q_1\circ \cdots\circ Q_k)\\
&\le& f^{\max\{\log \chi(J_1),\cdots, \log
\chi(J_{k/2})\}}\\
&&f^{\log (k/2)}\lambda(P[J_1,\cdots, J_{k}])\\
 & \le& f^{\log
\chi(J)}\lambda(P[J_1,\cdots, J_{k}])\\
&=& f^{\log \chi(J)}\lambda(P).
\end{eqnarray*}

 and
\begin{eqnarray*}
 \mu(P')&\le&\mu(Q_1\circ\cdots\circ Q_k)\\
 &\le& \mu(P[J_1,\cdots, J_{k}])\\
 &=& \mu(P).
 \end{eqnarray*}
\end{proof}

\begin{lemma}\label{time-lemma}
Assume that $\mu(D, J)\le a(n)$. Then the computational time for
Sketch($J$) is $O(|J|({(\log \chi(J))(\log a(n))\over
\epsilon})^{2})$, where $|J|$ is the number of nodes in $J$.
\end{lemma}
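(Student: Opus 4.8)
**

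The plan is to bound the running time of Sketch($J$) by a recursion on the tree structure, using the fact that Prune keeps at most one solution per region and that the number of regions is polynomial in the input size. The key quantity is the number of solutions in any list produced during the computation: after a call to Prune, a list has at most $t_1+1$ elements, since each surviving solution sits in a distinct interval $I_k$ and there are $t_1$ such intervals partitioning $[0,w]$. I would first verify that $t_1 = O\!\left(\frac{\log w}{\log f}\right) = O\!\left(\frac{z \log w}{\epsilon}\right) = O\!\left(\frac{(\log \chi(J))(\log a(n))}{\epsilon}\right)$, using $\log f = \log(1+\epsilon/z) = \Theta(\epsilon/z)$ and $z = 2\log\chi(J)$, together with $w \le a(n)$ (this is where the hypothesis $\mu(D,J)\le a(n)$, or rather the analogous bound on the profit sum $w$, enters). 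Denote this bound by $N := t_1+1 = O\!\left(\frac{(\log\chi(J))(\log a(n))}{\epsilon}\right)$.

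Next I would analyze the cost of the subroutines on pruned lists. Merge($L_1,L_2$) on two pruned lists produces a list of size at most $N^2$ in time $O(N^2)$ (times a polynomial factor for computing $\lambda$ and $\mu$ of each link incrementally). Prune on a list of size $M$ costs $O(M)$ after bucketing by region, which can be done in $O(M)$ using the closed form for the interval index of a given $\lambda$-value (a logarithm). Hence the combined Merge-then-Prune step on two pruned lists costs $O(N^2)$ and returns a pruned list of size at most $N$. Then Union($L_1,\dots,L_k$) is a balanced binary recursion of depth $\lceil\log k\rceil$ over $k$ already-pruned leaf lists, with $O(k)$ internal Merge/Prune nodes each costing $O(N^2)$, for a total of $O(k N^2)$ per Union call.

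Then I would set up the master recurrence for Sketch. Let $g(J)$ be the time for Sketch($J$). If $J$ has $k$ children inducing subtrees $J_1,\dots,J_k$, then $g(J) = \sum_{i=1}^k g(J_i) + (\text{cost of Union}(L_0,L_1,\dots,L_k))$, and the Union step costs $O((k+1)N^2) = O((k+1)\,((\log\chi(J))(\log a(n))/\epsilon)^2)$. Since $\chi(J_i)\le\chi(J)$ and $a(n)$ is a global bound, $N$ only needs to be taken once for the whole tree. Summing over all internal nodes of $J$, the total Union cost is $O\!\left(\big(\sum_{\text{nodes }v}(\deg(v)+1)\big)\,N^2\right) = O\!\left(|J|\,N^2\right)$, because $\sum_v \deg(v) = |J|-1$. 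Plugging in $N = O((\log\chi(J))(\log a(n))/\epsilon)$ yields $g(J) = O\!\left(|J|\left(\frac{(\log\chi(J))(\log a(n))}{\epsilon}\right)^{2}\right)$, which is the claimed bound.

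The main obstacle I anticipate is being careful about the model of computation and what exactly $a(n)$ bounds: the statement writes $\mu(D,J)\le a(n)$, but the list length is controlled by the range of $\lambda$-values, i.e.\ by $w=\sum_i a_i\,COST_R(D_i)$, not directly by the total size $s=\sum_i size(D_i)$; I would need to either reconcile this (presumably $a(n)$ is meant as a common upper bound on both $w$ and $s$, or on all input magnitudes) or track the two quantities separately and observe both give the same logarithmic factor. A secondary subtlety is that the number of Prune regions depends on $f$ and hence on $\chi(J)$ through $z=2\log\chi(J)$, so I must confirm $\log(1/f)^{-1}=\Theta(z/\epsilon)$ rather than something worse — a routine estimate via $\ln(1+x)\ge x/2$ for $x\in(0,1)$ — and that all arithmetic on numbers of magnitude $\mathrm{poly}(a(n))$ costs only logarithmic (hence absorbed) overhead. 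Apart from these bookkeeping points, the argument is a straightforward charging of $O(N^2)$ to each tree edge.
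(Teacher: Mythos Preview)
Your proposal is correct and follows essentially the same approach as the paper: bound the number of regions (and hence the pruned list length) by $N = O((\log\chi(J))(\log a(n))/\epsilon)$, derive $O(kN^2)$ for Union$(L_1,\dots,L_k)$ via the balanced recursion, and then charge the Sketch recurrence to the edges of $J$ to obtain $O(|J|N^2)$. Your treatment is in fact more careful than the paper's on two points it glosses over---the estimate $\log f = \Theta(\epsilon/z)$ and the mismatch between the hypothesis $\mu(D,J)\le a(n)$ and the actual quantity $w$ governing the interval count---so your caveats are well placed but do not indicate any gap in the argument.
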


\begin{proof} The number of intervals is $O({(\log \chi(J))(\log a(n))\over
\epsilon})$. Therefore the list of each $L_i$$=$Prune($J_i$) is of
length $O({(\log \chi(J))(\log a(n))\over \epsilon})$.

Let $F(k)$ be the time for Union$(L_1,\cdots, L_k)$. It satisfies
the recursion $F(k)=2F(k/2)+O(({(\log \chi(J))(\log a(n))\over
\epsilon})^2)$. This brings solution $F(k)=O(k({(\log
\chi(J))(\log a(n))\over \epsilon})^2)$.

 Let $T(J)$ be
the computational time for Prune($J$).

Denote $E(J)$ to be the number of edges in $J$.

 We prove by induction that
$T(J)\le c E(J)({(\log \chi(J))(\log a(n))\over \epsilon})^2$ for
some constant $c>0$. We select constant $c$ enough so that merging
two lists takes $c(n\log a(n))^2)$ steps.

 We have that
\begin{eqnarray*}
T(J)&\le& T(J_1)+\cdots +T(J_k)+F(k)\\
&\le& cE(J_1)({(\log \chi(J))(\log a(n))\over \epsilon})^2+\\
&&\cdots+cE(J_k)({(\log \chi(J))(\log a(n))\over \epsilon})^2\\
&&+ck({(\log \chi(J))(\log a(n))\over \epsilon})^2\\
&\le& cE(J)({(\log \chi(J))(\log a(n))\over \epsilon})^2\\
&\le& c|J|({(\log \chi(J))(\log a(n))\over \epsilon})^2.
\end{eqnarray*}
\end{proof}

We now complete the common procedure. Before we give the FPTAS for
our problems, we first give the following lemma which facilitates
our proof for FPTAS. One can refer the algorithm book
\cite{CormenLeisersonRivestStein01} for its proof.

\begin{lemma}\label{basic-lemma}
(1)For $0\le x\le 1$, $e^x\le 1+x+x^2$.

(2)For real $y\ge 1$, $(1+{1\over y})^y\le e$.
\end{lemma}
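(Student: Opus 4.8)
The plan is to prove the two inequalities separately, each by elementary means; neither depends on any earlier material in the paper.

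For part (1), I would start from the power-series expansion $e^x = 1 + x + \sum_{k\ge 2} x^k/k!$ and bound the tail on the interval $0 \le x \le 1$. On that interval $x^k \le x^2$ for every $k \ge 2$, so $\sum_{k\ge 2} x^k/k! \le x^2 \sum_{k\ge 2} 1/k! = (e-2)\,x^2 < x^2$; adding $1+x$ to both sides yields $e^x \le 1 + x + x^2$. As a backup (if one prefers to avoid infinite series) I would instead set $g(x) = 1 + x + x^2 - e^x$, note $g(0) = 0$, and check that $g'(x) = 1 + 2x - e^x$ is nonnegative throughout $[0,1]$ --- it vanishes at $x=0$, rises, and one verifies it has not become negative by $x = 1$ --- so $g$ is nondecreasing on $[0,1]$ and hence $g(x) \ge g(0) = 0$ there.

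For part (2), I would take logarithms and observe that $(1 + 1/y)^y \le e$ is equivalent to $y\ln(1 + 1/y) \le 1$. Applying the standard bound $\ln(1+t) \le t$, valid for all $t > -1$ (a consequence of the concavity of $\ln$, or equivalently of $e^t \ge 1+t$), with $t = 1/y > 0$ gives $y\ln(1+1/y) \le y\cdot(1/y) = 1$, and exponentiating both sides recovers the claim. This argument in fact works for every $y > 0$; the hypothesis $y \ge 1$ is stated only because that is the regime in which the bound is invoked later (to estimate quantities of the form $f^{z/\epsilon}$).

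I do not expect a real obstacle: both statements are textbook facts, and the excerpt itself points to \cite{CormenLeisersonRivestStein01} for them. The only points requiring a line of care are justifying $x^k \le x^2$ for $k \ge 2$ on $[0,1]$ in part (1) and recalling the inequality $\ln(1+t) \le t$ used in part (2); both are routine.
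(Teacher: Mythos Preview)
Your arguments for both parts are correct: the series-tail bound in (1) is clean (and the derivative backup also goes through once one notes $g'(1)=3-e>0$), and the logarithmic reduction in (2) via $\ln(1+t)\le t$ is standard and valid for all $y>0$. The paper itself does not give a proof of this lemma at all --- it simply cites \cite{CormenLeisersonRivestStein01} --- so there is no approach to compare against; you have supplied strictly more than the paper does.
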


\vskip 10pt

{\bf  Algorithm}

Approximate($J$, $\epsilon$)

Input: $J$ is an $AD$ tree with elements $D_1,\cdots, D_n$ and
$\epsilon$ is a small constant with $1>\epsilon>0$;

\qquad Let $L=$Sketch$(J)$;

\qquad Select $P_i$ from the list $L$ that $P_i$ has the optimal
cost;

{\bf End of the Algorithm}

\vskip 10pt
\begin{theorem}\label{main-theorem}
For any instance of $J$ of an $AD$ tree with $n$ elements, there
exists an $O(n({(\log \chi(J))(\log a(n))\over \epsilon})^{2})$
time approximation scheme, where $\sum_{i=1}^n a_iCOST_R(D_i)\le
a(n)$.
\end{theorem}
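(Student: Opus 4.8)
The plan is to combine the two kinds of lemmas already established: the \emph{approximation-quality} lemma (Lemma~\ref{lemma-1}), which controls how much the profit degrades during \texttt{Sketch}, and the \emph{running-time} lemma (Lemma~\ref{time-lemma}), which bounds the cost of \texttt{Sketch}. First I would let $P^{\mathrm{opt}}$ be an optimal solution for the instance $J$, i.e. a feasible choice of non-overlapping subtrees with $\mu(P^{\mathrm{opt}})\le C$ maximizing $\lambda$. Applying Lemma~\ref{lemma-1} to $P^{\mathrm{opt}}$ gives a solution $P'$ in the list $L=\texttt{Sketch}(J)$ with $\lambda(P')\le f^{\log\chi(J)}\lambda(P^{\mathrm{opt}})$ and, crucially, $\mu(P')\le\mu(P^{\mathrm{opt}})\le C$, so $P'$ is feasible; since \texttt{Approximate} returns the best solution in $L$, its output is at least as good as $P'$. (Here one has to translate between the $\lambda$-minimization phrasing in the lemmas, which bounds the approximate value from above by a factor times the true value, and the profit-maximization phrasing; this is the routine bookkeeping of rewriting $\tau$ versus $\lambda$ described in Section~\ref{sec:problem}.)

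Next I would bound the error factor $f^{\log\chi(J)}$. Recall $f=1+\epsilon/z$ with $z=2\log\chi(J)$, so $f^{\log\chi(J)}=(1+\epsilon/z)^{\log\chi(J)}=(1+\epsilon/z)^{z/2}$. Using Lemma~\ref{basic-lemma}(2) with $y=z/\epsilon\ge 1$ (which holds since $\epsilon<1\le z$), we get $(1+\epsilon/z)^{z/\epsilon}\le e$, hence $(1+\epsilon/z)^{z/2}\le e^{\epsilon/2}$. Then Lemma~\ref{basic-lemma}(1) with $x=\epsilon/2\le 1$ gives $e^{\epsilon/2}\le 1+\epsilon/2+\epsilon^2/4\le 1+\epsilon$. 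Therefore the returned solution has value within a factor $(1+\epsilon)$ of optimal, which is exactly the FPTAS guarantee.

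For the running time, I would simply invoke Lemma~\ref{time-lemma} with $a(n)=\sum_{i=1}^n a_i COST_R(D_i)$, which by hypothesis bounds $\mu(D,J)$ in the relevant sense, yielding time $O\!\left(|J|\left(\frac{(\log\chi(J))(\log a(n))}{\epsilon}\right)^{2}\right)$; since normalization at most doubles the number of nodes, $|J|=O(n)$, giving the stated $O\!\left(n\left(\frac{(\log\chi(J))(\log a(n))}{\epsilon}\right)^{2}\right)$ bound. This is polynomial in $n$, $\log a(n)$ (hence in the input size) and $1/\epsilon$, so the scheme is a genuine FPTAS.

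The main obstacle is not any single calculation but making sure the pieces align: that the pruning step never discards \emph{all} feasible near-optimal solutions (handled by the ``least $\mu$ in each region'' rule, so feasibility is preserved as used in Lemma~\ref{lemma-union}), that the error compounds only additively in the exponent along a root-to-leaf path rather than multiplicatively over all nodes (this is exactly why $z$ is tied to $\log\chi(J)$ and why $\chi$ is defined as a \emph{path} product of degrees), and that the value being optimized in the lemmas ($\lambda$) is the correct surrogate for the original objective $\tau$. Once those correspondences are in place, the theorem follows by chaining Lemma~\ref{lemma-1}, Lemma~\ref{basic-lemma}, and Lemma~\ref{time-lemma} as above.
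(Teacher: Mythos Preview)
Your proposal is correct and follows essentially the same route as the paper: apply Lemma~\ref{lemma-1} to an optimal $P$, bound $f^{\log\chi(J)}=(1+\epsilon/z)^{z/2}\le e^{\epsilon/2}\le 1+\epsilon$ via Lemma~\ref{basic-lemma}, and read off the running time from Lemma~\ref{time-lemma}. The only difference is that you spell out the feasibility argument ($\mu(P')\le\mu(P^{\mathrm{opt}})\le C$) and the justification for $y=z/\epsilon\ge 1$ a bit more explicitly than the paper does.
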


\begin{proof}
Assume that $P$ is the optimal solution for input $J$. Let
$L$$=$Prune($J$). By Lemma~\ref{lemma-1}, we have $P^*\in L$ that
satisfies the condition of Lemma~\ref{lemma-1}.
\begin{eqnarray*}
\lambda(P^*)&\le& f^{\log \chi(J)}\lambda(P) \\
&=& (1+{\epsilon\over 2 z })^{\log \chi(J)} \lambda(P) \\
&\le& e^{\epsilon\over 2}\cdot \lambda(P) \\
&=& (1+{\epsilon\over 2}+({\epsilon\over 2})^2)\cdot \lambda(P) \\
&<& (1+\epsilon)\cdot \lambda(P) .
\end{eqnarray*}

Furthermore, $\mu(P^*)\le \mu(P)$.

 The computational time follows
from Lemma~\ref{time-lemma}.
\end{proof}

It is easy to see that $\chi(J)\le 2^{|J|}$. We have the following
corollary.
\begin{corollary}
For any instance of $J$ of an $AD$ tree with $n$ elements, there
exists an $O(n^3({\log a(n)\over \epsilon})^{2})$ time
approximation scheme, where $\sum_{i=1}^n a_iCOST_R(D_i)\le a(n)$.
\end{corollary}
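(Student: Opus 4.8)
The plan is to read the corollary off Theorem~\ref{main-theorem} by substituting a crude but sufficient bound on $\chi(J)$. Theorem~\ref{main-theorem} already supplies a $(1+\epsilon)$-approximation running in time $O(n((\log\chi(J))(\log a(n))/\epsilon)^2)$, so the only thing left to do is to show $\log\chi(J)=O(n)$ and then simplify the expression; the approximation guarantee itself requires no new argument, since the algorithm is unchanged.

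First I would justify the inequality $\chi(J)\le 2^{|J|}$. Recall that $\chi(J)$ is the maximum, over root-to-leaf paths $v_0,\dots,v_\ell$ in the (normalized) $AD$ tree $J$, of the product $d_{v_0}d_{v_1}\cdots d_{v_\ell}$, where $d_v$ denotes the number of children of $v$. For any fixed path, the sets of edges leaving distinct nodes $v_j$ are pairwise disjoint, so $d_{v_0}+\cdots+d_{v_\ell}\le E(J)=|J|-1$. Since $d\le 2^{d}$ for every nonnegative integer $d$, we get $\chi(J)\le\prod_j 2^{d_{v_j}}=2^{\sum_j d_{v_j}}\le 2^{|J|-1}\le 2^{|J|}$, hence $\log\chi(J)\le|J|$. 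I would also note that normalization (as defined just before Lemma~\ref{lemma-union}) at most doubles the node count, so $|J|=O(n)$ and therefore $\log\chi(J)=O(n)$.

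Plugging $\log\chi(J)=O(n)$ into the bound of Theorem~\ref{main-theorem} gives running time $O(n(n\log a(n)/\epsilon)^2)=O(n^3(\log a(n)/\epsilon)^2)$, which is exactly the claimed complexity, while the $(1+\epsilon)$ approximation guarantee and the feasibility condition $\mu(P^*)\le\mu(P)$ carry over verbatim from that theorem (and ultimately from Lemma~\ref{lemma-1} and Lemma~\ref{time-lemma}).

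There is essentially no hard step here: the corollary is a one-line specialization once the degree-product bound is in place, and the only place that calls for care is that bound itself --- specifically, remembering to compare the per-path sum of child-counts against the number of \emph{edges} rather than nodes, and checking that the normalization overhead and the choices of logarithm base are absorbed into the $O(\cdot)$ notation.
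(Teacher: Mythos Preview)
Your proposal is correct and follows exactly the same route as the paper: derive the corollary from Theorem~\ref{main-theorem} by invoking the bound $\chi(J)\le 2^{|J|}$ and substituting $\log\chi(J)=O(n)$. The paper in fact only asserts ``It is easy to see that $\chi(J)\le 2^{|J|}$'' without further justification, so your edge-counting argument for that inequality supplies more detail than the original.
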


\section{Extension to Multiple Trees}
In this section, we show an approximation scheme for the problem
with an input of multiple trees. The input is a series of trees
$J_1,\cdots, J_k$.

\vskip 10pt
\begin{theorem}
For any instance of $J$ of an $AD$ tree with $n$ elements, there
exists an $O(n({(\log \chi(J_0))(\log a_0(n))\over \epsilon})^{2})$
time approximation scheme, where $\sum_{j}(p_j+c_j)\le a_0(n)$ and
$J_0$ is a tree via connecting all $J_1,\cdots, J_k$ into a single
tree under a common root $r_0$.
\end{theorem}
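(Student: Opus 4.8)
The plan is to reduce the multi-tree instance to a single-tree instance and then invoke Theorem~\ref{main-theorem}. First I would build $J_0$ exactly as stated: introduce a fresh root $r_0$ whose children are the roots of $J_1,\dots,J_k$, and assign $r_0$ the parameters $COST_R(r_0)=COST_A(r_0)=0$ (so that $\nabla COST(r_0)=0$) together with $size(r_0)=\delta+1$, i.e.\ strictly larger than the capacity. With these choices $r_0$ can never belong to a feasible selection $M$, so every feasible selection for $J_0$ is a union of disjoint subtrees lying entirely inside the original $J_1,\dots,J_k$; conversely every feasible selection for the forest is feasible for $J_0$. Moreover $r_0$ contributes $0$ to $\lambda$, and its size is irrelevant since it is never chosen. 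Hence the optimal values of the two instances coincide, and a $(1+\epsilon)$-approximate solution for $J_0$ is a $(1+\epsilon)$-approximate solution for the forest.

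Next I would run $\mathrm{Approximate}(J_0,\epsilon)$ from the previous section, after the usual normalization step that at most doubles the number of nodes. Theorem~\ref{main-theorem} then yields $P^*$ with $\lambda(P^*)\le(1+\epsilon)\lambda(P)$ and $\mu(P^*)\le\mu(P)$ for the optimum $P$, in time $O(|J_0|((\log\chi(J_0))(\log a(n))/\epsilon)^2)$, where $a(n)$ is any upper bound on $\sum_i a_iCOST_R(D_i)$. It remains to check the two parameters. Since $J_0$ adds a single node to the forest, $|J_0|=n+1=O(n)$. For the sum, $\sum_i a_iCOST_R(D_i)=\sum_i a_i(\nabla COST(D_i)+COST_A(D_i))$; writing $p_j=a_j\nabla COST(D_j)$ for the profit and $c_j=size(D_j)$ for the size, one has $\sum_i a_iCOST_R(D_i)\le\sum_j(p_j+c_j)\le a_0(n)$ (adjoining $r_0$ changes nothing here, as it has zero cost), so $a_0(n)$ is a legitimate choice for $a(n)$.

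Finally I would bound $\chi(J_0)$. A root-to-leaf path in $J_0$ is the edge $r_0\to r_i$ followed by a root-to-leaf path in $J_i$, so the product of node degrees along it equals $k$ times the corresponding product inside $J_i$; hence $\chi(J_0)\le k\cdot\max_i\chi(J_i)$. This inflates the $\log\chi$ factor only additively, by $\log k\le\log n$, so the time bound remains $O(n((\log\chi(J_0))(\log a_0(n))/\epsilon)^2)$ as claimed (and, using $\chi(J_0)\le 2^{|J_0|}$, one also recovers a bound of the form $O(n^3(\log a_0(n)/\epsilon)^2)$ as in the corollary). The only step requiring genuine care is the choice of parameters for $r_0$: it must be made inert — zero profit and over-capacity size — so that enlarging the instance neither creates spurious high-profit selections nor alters feasibility; everything else is a direct application of the single-tree machinery, so I do not anticipate a real obstacle.
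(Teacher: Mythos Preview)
Your approach is exactly the paper's: adjoin a dummy root $r_0$ to make a single tree $J_0$ and invoke Theorem~\ref{main-theorem}. Your extra care in making $r_0$ inert (zero profit, over-capacity size) and in checking $|J_0|=O(n)$ and $\chi(J_0)$ is more than the paper provides, which simply says ``build $J_0$ and apply the algorithm.''

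One step does not go through as written. You claim
\[
\sum_i a_i\,COST_R(D_i)=\sum_i p_i+\sum_i a_i\,COST_A(D_i)\le \sum_j(p_j+c_j),
\]
which would require $\sum_i a_i\,COST_A(D_i)\le \sum_j size(D_j)$; nothing in the model guarantees that. The fix is simply to observe that the algorithm only needs an upper bound on the range of $\lambda$-values (the interval $[0,w]$ being partitioned), and every $\lambda(P)$ is at most $\sum_j p_j\le \sum_j(p_j+c_j)\le a_0(n)$; so $a_0(n)$ is a valid substitute for $a(n)$ without passing through $\sum_i a_i\,COST_R(D_i)$.
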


\begin{proof}
Build a new tree with a new node $r_0$ such that $J_1,\cdots, J_k$
are the subtrees under $r_0$.

Apply the algorithm in in Theorem~\ref{main-theorem}.
\end{proof}

\section{Conclusion and Future Work}
\label{sec:conclude}

In this work, we studied the problem of XML reconstruction view
selection that promises to improve query evaluation in XML
databases. We were first to formally define this problem: given a
set of XML elements $D$ from an XML database, their access
frequencies $a_i$ (aka workload), a set of ancestor-descendant
relationships $AD$ among these elements, and a storage capacity
$\delta$, find a set of elements $M$ from $D$, whose XML subtrees
should be materialized as reconstruction views, such that their
combined size is no larger than $\delta$. Next, we showed that the
XML reconstruction view selection problem is NP-hard. Finally, we
proposed a fully polynomial-time approximation scheme (FPTAS) that
can be used to solve the problem in practice.

Future work for our research includes two main directions: (1)~an
extension of the proposed solution to support multiple XML trees
and (2)~an implementation and performance study of our framework
in an existing XML database.



\end{document}